\theoremstyle{definition}
\newtheorem{defi}{Definition}
\theoremstyle{plain}
\newtheorem{theo}{Theorem}
\newtheorem{prop}[theo]{Proposition}
\newtheorem{coro}[theo]{Corollary}
\begin{document}
\title{Non-Depth-First Search against Independent Distributions on an AND-OR Tree
}
\author{Toshio Suzuki
\\
Department of Mathematics and Information Sciences, \\
Tokyo Metropolitan University, \\ 
Minami-Ohsawa, Hachioji, Tokyo 192-0397, Japan\\
toshio-suzuki@tmu.ac.jp
} 

\date{\today}

\maketitle              

\begin{abstract} Suzuki and Niida (\textit{Ann. Pure. Appl. Logic}, 2015) showed the following results on independent distributions (IDs) on an AND-OR tree, where they took only depth-first algorithms into consideration. 
(1) Among IDs such that probability of the root having value 0 is fixed as a given $r$ such that $0 < r <1$, 
if $d$ is a maximizer of cost of the best algorithm then $d$ is an independent and identical distribution (IID). 
(2) Among all IDs, 
if $d$ is a maximizer of cost of the best algorithm then $d$ is an IID. 
In the case where non-depth-first algorithms are taken into consideration, the counter parts of (1) and (2) 
are left open in the above work. 
Peng et al. (\textit{Inform. Process. Lett.}, 2017) extended (1) and (2) to multi-branching trees, where in (2) they put an additional hypothesis 
on IDs that probability of the root having value 0 is neither 0 nor 1. 
We give positive answers for the two questions of Suzuki-Niida. 
A key to the proof is that if ID $d$ achieves the equilibrium among IDs 
then we can chose an algorithm of the best cost against $d$ from depth-first algorithms. 
In addition, we extend the result of Peng et al. to the case where non-depth-first algorithms are taken into consideration. 

\vspace{\baselineskip}

Keywords:
Non-depth-first algorithm; Independent distribution; Multi-branching tree; Computational complexity; Analysis of algorithms

MSC[2010] 68T20; 68W40
\end{abstract}

\section{Introduction}

In this paper, we are interested in bi-valued minimax trees, in other words, 
AND-OR trees. Thus, every internal node is labeled either AND or OR, and each leaf is assigned 
1 (true) or 0 (false). AND layers and OR layers alternate. 
At each internal node, the number of child nodes is 2 or more. 
Given such a tree, a distribution on it denotes a probability distribution on 
the truth assignments to the leaves. 
In particular, we investigate sufficient conditions for an independent distribution has 
an optimal algorithm that is depth-first. 

To be more precise, at the beginning of computation, truth values of leaves are hidden. 
An algorithm $A$ for a given AND-OR tree $T$ is 
a Boolean decision tree whose nodes are labeled by leaves of $T$. 
An objective of $A$ is to find value of the root of $T$. 
During computation, if $A$ has enough information to determine an internal node $x$ of $T$ 
then $A$ omits to make queries to the remaining descendants of $x$. 

\begin{defi}
$A$ is a \emph{depth-first} algorithm if for each internal node $x$, 
once $A$ makes a query to a leaf that is a descendent of $x$, 
$A$ does not make a query to a leaf that is not a descendent of $x$ 
until $A$ finds value of $x$. If $A$ is not depth-first, $A$ is 
\emph{non-depth-first}. 
\end{defi}

Cost of a computation is measured 
by the number of queries made by $A$ during the computation. 
In the case where a distribution is given, 
cost denotes expected value of the above mentioned cost. 

When a tree $T$ and a distribution on it are given, 
an optimal algorithm denotes an algorithm whose cost achieves the minimum. 
An independent distribution (ID) denotes a distribution such that 
each leaf has probability of having value 0, and value of the leaf is 
independent of the other leaves. 
If all the leaves have the same probability then such an ID is 
an independent and identical distribution (IID). 

Early in the 1980s, optimal algorithms for IID were studied by Pearl \cite{Pe80,Pe82} and Tarsi \cite{Ta83}. 

\begin{defi} \cite{Ta83}
A tree is \emph{balanced} if (1) and (2) hold. 

(1) Internal nodes of the same depth (distance from the root) has the same number of child nodes. 

(2) All the leaves have the same depth. 

If, in addition, all the internal nodes have the same number $k$ of child nodes then we call 
such a tree a \emph{uniform $k$-ary tree}. 
\end{defi}

Tarsi investigated the case where a tree is a balanced NAND tree and 
a distribution is an IID such that probability of a leaf is neither 0 nor 1.  
He showed that, under this hypotheses, 
there exists an optimal algorithm that is depth-first and directional. Here, an algorithm $A$ for a tree $T$ 
is \emph{directional} \cite{Pe80} 
if there is a fixed linear order of the leaves of $T$ such that 
for any truth assignments to the leaves, 
priority of $A$ probing the leaves is consistent with the order. 

In the mid 1980s, Saks and Wigderson \cite{SW86} studied game trees with a focus on correlated distributions, 
in other words, distributions not necessarily independent. 
However, they did not investigate non-depth-first search against an independent distribution. 

In the late 2000s, Liu and Tanaka \cite{LT07} shed light on the result of Saks-Wigderson again. Since then, optimal algorithms and equilibria have been studied in subsequent works, both in the correlated distribution case \cite{SN12,POLT16} and in the independent distribution case \cite{SN15,PPNTY17}. 

Among these works, Suzuki and Niida \cite{SN15} studied independent distributions on a binary tree such that probability of the root having value 0 is $r$, where $r$ is a fixed real number such that $0 < r < 1$. 
In this setting, they showed that any distribution achieving an equilibrium is IID. 
In addition, they showed that the same as above holds when a distribution runs over all IDs. 
Peng et al. \cite{PPNTY17} extended the results of \cite{SN15} to the case of multi-branching trees 
(under an additional hypothesis). 

However, in a series of studies from \cite{LT07} to \cite{PPNTY17}, algorithms are assumed to be depth-first. In \cite{SN15}, they raised questions whether the results in that paper hold when we  take non-depth-first algorithms into consideration. 

In this paper, we study counter parts to the results in \cite{SN15} and \cite{PPNTY17} 
in the presence of non-depth-first algorithms.

In section 2, we give definitions and review former results. 

In section 3, we look at specific examples on binary AND-OR trees (and OR-AND trees) whose heights are 2 or 3. 
The class of IDs has a nice property in the case of height 2. 
There exists an optimal algorithm that is depth-first. 
On the other hand, there is a uniform binary OR-AND tree of height 3 and an ID on it 
such that all optimal algorithms are non-depth-first.

In section 4, we give positive answers to the questions in \cite{SN15} 
(Theorem~\ref{theo:answers4sn15}, the main theorem). 
A key to the proof is that if ID $d$ achieves the equilibrium among IDs 
then there is an optimal algorithm for $d$ that is depth-first. 
In the proof, we do not use explicit induction. We use results of Suzuki-Niida (2015) \cite{SN15} and Tarsi (1983) \cite{Ta83}. Induction is in use in the proofs of these results. 

By using a result of Peng et al. (2017) \cite{PPNTY17} in the place of the result of Suzuki and Niida, 
we show similar results on multi-branching trees. This result extends the result of Peng et al. to 
the case where non-depth-first algorithms are taken into consideration. 

\section{Preliminaries}

\subsection{Notation}

The root of an AND-OR tree is labeled by AND. Each child node of an AND node (OR node, respectively) is either an internal node labeled by OR (AND, respectively) or a leaf. The concept of an OR-AND tree is defined in a similar way with the roles of AND and OR exchanged. 

We let $\lambda$ denote the empty string. 
Given a tree, we denote its root by $x_{\lambda}$. 
Suppose $u$ is a string and $x_{u}$ is an internal node of a given tree, 
and that $x_{u}$ has $n$ child nodes. 
Then we denote the child nodes by $x_{u0}, x_{u1}, \dots,$ and $x_{u(n-1)}$. 

Throughout the paper, unless otherwise specified, 
an algorithm denotes a deterministic algorithm, that is, 
it does not use a random number generator.  

The terminology ``straight algorithm'' \cite{Ta83} is a synonym of 
``depth-first algorithm''. Since we have assumed that if an algorithm has enough information it skips a leaf, a depth-first algorithm is a special case of 
an alpha-beta pruning algorithm \cite{KM75}. 

Suppose that a balanced tree is given. 
In \cite{Ta83}, a depth-first directional algorithm SOLVE is defined as follows. 
A leaf $x_{u}$ has higher priority of probing than a leaf $x_{v}$ if and only if 
$u$ is less than $v$ with respect to lexicographic order. 
For example, if a given tree is binary and of height 2, 
then $x_{00}$ has the highest priority, then $x_{01}, x_{10}$ and $x_{11}$ follow 
in this order. 

Suppose that given a tree, we investigate algorithms on this fixed tree. 
We introduce the following conventions on notation. 

``$A$: non-depth'' stands for 
``$A$ is an algorithm''. 
$A$ may be depth-first or non-depth-first. 
$A$ may be directional or non-directional.

``$A$: depth'' stands for 
``$A$ is a depth-first algorithm''. 
$A$ may be directional or non-directional. 

Suppose that we use one of the above as a suffix of an operator, say as follows. 

\[
\max_{A: \mbox{non-depth}}
\]

Then we mean that $A$ runs over the domain designated by the suffix. 
In the above example, $A$ runs over all deterministic algorithms, 
where $A$ may be depth-first or non-depth-first and $A$ may be directional or non-directional.

For any node $x$ of a given tree, unless otherwise specified, probability of $x$ 
denotes probability of $x$ having value 0.

\subsection{Previous results}

\begin{theo} (Tarsi \cite{Ta83}) \label{theo:ta83}
Suppose that $T$ is a balanced NAND-tree and that $d$ is an IID and probability of the root is neither 0 nor 1. 
Then there exists a depth-first directional algorithm $A_{0}$ with the following property. 
\begin{equation} \label{eq:tarsi}
\mathrm{cost} (A_{0},d) = \min_{A: \mbox{non-depth}} \mathrm{cost} (A,d)
\end{equation}
\end{theo}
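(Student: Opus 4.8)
The plan is to argue by induction on the height $h$ of the balanced tree $T$, reducing the global optimality of a depth-first directional algorithm to two separate claims at the root: first, that an optimal algorithm never needs to interleave queries between distinct child subtrees, and second, that among the children the evaluation order is irrelevant. I would work with the equivalent AND-OR tree obtained from the NAND-tree by De Morgan's laws, so that without loss of generality $x_{\lambda}$ is an AND node with children $C_{1},\dots,C_{k}$, each the root of a balanced subtree of height $h-1$ on which the restricted distribution is again an IID with the same leaf probability $p\in(0,1)$; in particular the induction hypothesis applies to each $C_{i}$. The base case $h=1$ (an internal node over leaves) is direct, since any algorithm merely probes leaves and the claim follows from IID symmetry. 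For the inductive step I would record, for the inductively optimal algorithm on a height-$(h-1)$ subtree, the two conditional costs $\gamma^{0}_{h-1}$ and $\gamma^{1}_{h-1}$, the expected numbers of queries needed to certify the subtree's value given that it is $0$, resp.\ $1$, together with the probability $q_{h-1}$ that the subtree evaluates to $0$. Because leaves are identically distributed and $T$ is balanced, all the $C_{i}$ share the same triple $(\gamma^{0}_{h-1},\gamma^{1}_{h-1},q_{h-1})$, and this symmetry is exactly what will make the sibling order immaterial.

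Granting the inductive hypothesis at height $h-1$, the two easier claims follow. Within a single subtree, independence guarantees that the conditional distribution of its still-hidden leaves is never changed by queries made outside it, so the cheapest way to certify $C_{i}$ in isolation is precisely the inductively optimal depth-first directional algorithm $A_{i}$; hence, once interleaving is known to be unnecessary, each subtree is handled by $A_{i}$. For the order, the identical-distribution hypothesis makes the $C_{i}$ stochastically interchangeable, so any permutation of the sibling-evaluation order yields the same expected cost, and choosing the lexicographic order makes the resulting algorithm directional and recovers SOLVE. Substituting the triple $(\gamma^{0}_{h-1},\gamma^{1}_{h-1},q_{h-1})$ into the root's evaluation then gives the recurrence defining the cost of SOLVE at height $h$.

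The crux --- and the step I expect to be the main obstacle --- is the \emph{no-interleaving lemma}: that there is an optimal algorithm which, once it queries a leaf of $C_{i}$, makes no query outside $C_{i}$ until the value of $C_{i}$ is certified. This is not a consequence of independence alone: the paper's own height-$3$ OR-AND example exhibits an ID whose optimal algorithms are all non-depth-first, so the identical-distribution and balanced hypotheses must be used essentially. I would prove the lemma by an exchange argument on the decision tree of an arbitrary optimal $A$: locate the first moment at which $A$, with $C_{i}$ still unresolved, queries a leaf outside $C_{i}$, and compare $A$ with the modification $A'$ that finishes $C_{i}$ before leaving it. The delicate point is that $A$ may be \emph{hedging} --- abandoning a costly-looking $C_{i}$ in the hope that some sibling resolves the AND (a $C_{j}=0$) and spares the remaining work in $C_{i}$ --- so the exchange is not cost-free in general. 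What must be verified is the inequality that, conditioned on the information gathered so far, the expected cost of ``finish $C_{i}$, then proceed'' does not exceed that of ``probe the sibling first, then possibly return.'' Here the identical distribution is what tips the inequality the right way: since every sibling carries the same conditional cost-to-certify and the same $0$-probability, the potential saving from hedging never exceeds its extra exploration cost, whereas for non-identical leaves this cancellation fails, matching the height-$3$ counterexample. Once the lemma is in hand, the recurrence of the previous paragraph closes the induction and yields $A_{0}=\mathrm{SOLVE}$ satisfying~\eqref{eq:tarsi}.
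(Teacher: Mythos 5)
This statement is quoted from Tarsi \cite{Ta83} and the paper gives no proof of it, so there is no internal argument to compare against; what follows is an assessment of your reconstruction on its own terms. Your overall architecture --- reduce to an AND-OR tree by De Morgan, induct on height, split the inductive step into a no-interleaving lemma plus order-irrelevance by IID symmetry, and close with the cost recurrence for SOLVE --- is the right shape, and the base case, the order-irrelevance step, and the within-subtree optimality step (granted no interleaving, via independence and the induction hypothesis) are all sound.

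The genuine gap is that the no-interleaving lemma, which you correctly identify as the crux, is not actually proved. The sentence doing the work --- ``since every sibling carries the same conditional cost-to-certify and the same $0$-probability, the potential saving from hedging never exceeds its extra exploration cost'' --- is a restatement of the desired inequality, not a derivation of it, and the symmetry it invokes is broken in exactly the situation the lemma must handle: at the moment the algorithm contemplates leaving $C_{i}$, that subtree has already been partially probed, so its conditional probability of being $0$ and its conditional cost-to-finish have been updated and no longer match those of an untouched sibling $C_{j}$. The exchange argument therefore has to compare a partially-revealed subtree against a fresh one, and showing that finishing $C_{i}$ first is never worse requires a quantitative analysis of how the conditional cost and the conditional failure probability of a partially evaluated balanced IID subtree co-vary --- this is the substantive content of Tarsi's theorem and occupies most of his paper. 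As written, your proof would ``verify'' equally well for the non-balanced IID configurations where the conclusion is false, since nothing in the sketch uses balancedness at the decisive step. To close the gap you would need to state and prove the exchange inequality explicitly, with the conditional quantities of the partially explored subtree tracked through the induction (or follow Tarsi's own route, which inducts on the number of leaves and analyzes the structure of an optimal decision tree directly rather than exchanging at the first departure point).
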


In short, under the above assumption, optimal algorithm is chosen from depth-first algorithms. 

Theorem~\ref{theo:sn15} (2) is asserted in \cite{LT07} without a proof, and later, 
a proof is given in \cite{SN15} by using Theorem~\ref{theo:sn15} (1). 

\begin{theo} (Suzuki-Niida \cite{SN15}; see also Liu-Tanaka \cite{LT07}) \label{theo:sn15}
Suppose that $T$ is a uniform binary AND-OR tree. 

(1) Suppose that $r$ is a real number such that $0 < r <1$. 
Suppose that $d_{0}$ is an ID such that probability of the root is $r$ and the following equation holds. 

\begin{equation} \label{eq:achieve_eq_sn15r}
\min_{A: \mbox{depth}} \mathrm{cost} (A,d_{0})
=
\max_{d:\mbox{ID}, r} \min_{A: \mbox{depth}} \mathrm{cost} (A,d)
\end{equation}

Here, $d$ runs over all IDs such that probability of the root is $r$. 
Then $d_{0}$ is an IID.

(2) Suppose that $d_{1}$ is an ID such that the following equation holds. 

\begin{equation} \label{eq:achieve_eq_sn15}
\min_{A: \mbox{depth}} \mathrm{cost} (A,d_{1})
=
\max_{d:\mbox{ID}} \min_{A: \mbox{depth}} \mathrm{cost} (A,d)
\end{equation}

Here, $d$ runs over all IDs. 
Then $d_{1}$ is an IID.
\end{theo}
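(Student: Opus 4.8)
The plan is to reduce everything to the recursion that the depth-first constraint forces on the optimal cost. Write $C(S,d)=\min_{A:\mathrm{depth}}\mathrm{cost}(A,d)$ for the optimal depth-first cost on a subtree $S$ under an ID $d$. Since a depth-first algorithm must finish one child subtree before touching the other, for a node with child subtrees $U,V$ of root-$0$-probabilities $p_U,p_V$ one gets
\[ C(\mathrm{OR})=\min\{C(U)+p_U\,C(V),\ C(V)+p_V\,C(U)\}, \]
\[ C(\mathrm{AND})=\min\{C(U)+(1-p_U)C(V),\ C(V)+(1-p_V)C(U)\}, \]
with $p_{\mathrm{OR}}=p_Up_V$ and $p_{\mathrm{AND}}=1-(1-p_U)(1-p_V)$. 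Both bracketed expressions increase in $C(U)$ and in $C(V)$, so maximizing $C$ at a node over IDs with prescribed child probabilities reduces to maximizing $C(U)$ and $C(V)$ separately.

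Guided by this, I would prove part (1) by induction on the height through the single auxiliary function
\[ G_S(p)=\max\{\,C(S,d):d\ \text{an ID on } S,\ p_{\mathrm{root}}(S,d)=p\,\}, \]
establishing simultaneously that (i) the maximum is attained by an IID and (ii) $G_S$ is concave, nondecreasing if $\mathrm{root}(S)$ is OR and nonincreasing if it is AND. For the base case of a single node over two leaves a direct computation gives $G(p)=1+\sqrt p$ for OR and $G(p)=1+\sqrt{1-p}$ for AND, each concave with the claimed monotonicity and realized by the balanced assignment $q_0=q_1$.

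For the inductive step at an AND node whose OR-rooted children are governed by an increasing concave $g$, the recursion and the monotonicity remark give
\[ G(p)=\max_{(1-p_0)(1-p_1)=1-p}\ \min\bigl\{g(p_0)+(1-p_0)g(p_1),\ g(p_1)+(1-p_1)g(p_0)\bigr\}, \]
and the entire content is to show that this maximum is attained at the symmetric split $p_0=p_1$ and that the resulting $G$ is again concave and nonincreasing (the OR node is dual, with constraint $p_0p_1=p$). The symmetric split forces both children to be the common IID supplied by the inductive hypothesis, so the whole tree is IID, which gives the existence form of conclusion (i); upgrading it to the assertion that \emph{every} maximizer is an IID, as part (1) demands, requires the split lemma to hold with strict inequality off the symmetric point. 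This symmetric-split lemma, together with propagating concavity and monotonicity up the tree, is where I expect the real difficulty: the objective is a minimum of two branches and hence only piecewise smooth, the constraint is nonlinear, and one must show that any deviation from $p_0=p_1$ strictly lowers the binding branch, using precisely the concavity and monotonicity of $g$ handed down by the induction.

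Finally I would obtain part (2) from part (1) as the excerpt indicates. Let $d_1$ maximize $\min_{A:\mathrm{depth}}\mathrm{cost}(A,d)$ over all IDs and set $r_1=p_{\mathrm{root}}(d_1)$. After checking that the overall maximum is attained at an interior $r_1\in(0,1)$ — the boundary values $r_1\in\{0,1\}$ make the root value essentially determined and the cost too small to be maximal — observe that $d_1$ is in particular a maximizer over the smaller family of IDs whose root probability equals $r_1$. Part (1) applied to that family then shows $d_1$ is an IID.
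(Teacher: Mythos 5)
First, a point of reference: the paper does not prove this statement at all. Theorem~\ref{theo:sn15} is imported verbatim from Suzuki--Niida \cite{SN15} (with the remark that part (2) is deduced there from part (1)) and is used only as a black box in the proof of Theorem~\ref{theo:answers4sn15}. So there is no in-paper proof to compare against; your proposal has to stand on its own as a reconstruction of the argument of \cite{SN15}.

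As such a reconstruction, your framework is sound: the depth-first recursion for the optimal cost, the constraints $p_{\mathrm{OR}}=p_Up_V$ and $1-p_{\mathrm{AND}}=(1-p_U)(1-p_V)$, the decoupling of the two subtree maximizations via monotonicity in $C(U)$ and $C(V)$, and the reduction of part (2) to part (1) plus the fact that a global maximizer has root probability in $(0,1)$ (note that this last fact is itself a separate result --- it is Theorem 6 of \cite{SN15}, invoked as such in the present paper --- and your one-clause justification of it is not a proof). The genuine gap is exactly where you flag it: the symmetric-split lemma. You never show that, for $g$ increasing and concave, the constrained maximum of $\min\{g(p_0)+(1-p_0)g(p_1),\ g(p_1)+(1-p_1)g(p_0)\}$ over $(1-p_0)(1-p_1)=1-p$ is attained \emph{only} at $p_0=p_1$, nor that concavity and the stated monotonicity propagate to the resulting $G$. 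Since the two branches cross at the symmetric point (their difference is $p_1g(p_0)-p_0g(p_1)$, so the binding branch switches according to the sign of $g(p_0)/p_0-g(p_1)/p_1$), the objective is only piecewise smooth there, and the strictness of the lemma is precisely the uniqueness assertion of part (1). This lemma is the entire mathematical content of the theorem; without it the proposal is a plan, not a proof. Moreover, it is not evident that ``concave and monotone'' is a strong enough inductive invariant to force the strict inequality at every level, so you should expect to have to strengthen the invariant (or carry more quantitative information about $G$ through the induction) before the inductive step closes.
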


Peng et al. \cite{PPNTY17} extended Theorem~\ref{theo:sn15} to the multi-branching case 
with an additional assumption. 

\begin{theo} (Peng et al. \cite{PPNTY17}) \label{theo:ppnty17}
Suppose that $T$ is a balanced (multi-branching) AND-OR tree. 

(1) Suppose that $r$ is a real number such that $0 < r <1$. 
Suppose that $d_{0}$ is an ID with the following property. 
The probability of the root is $r$, and the following equation holds. 

\begin{equation} \label{eq:achieve_eq_ppnty17r}
\min_{A: \mbox{depth}} \mathrm{cost} (A,d_{0})
=
\max_{d: \mbox{ID},r} \min_{A: \mbox{depth}} \mathrm{cost} (A,d)
\end{equation}

Here, $d$ runs over all IDs such that probability of the root is $r$. 

Then $d_{0}$ is an IID.

(2) Suppose that $d_{1}$ is an ID with the following property. Probability of root is neither 0 nor 1, and the following equation holds. 

\begin{equation} \label{eq:achieve_eq_ppnty17}
\min_{A: \mbox{depth}} \mathrm{cost} (A,d_{1})
=
\max_{d: \mbox{ID}} \min_{A: \mbox{depth}} \mathrm{cost} (A,d)
\end{equation}

Here, $d$ runs over all IDs. Then $d_{1}$ is an IID.
\end{theo}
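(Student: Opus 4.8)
The plan is to prove the statement by induction on the height of the balanced tree, adapting to the multi-branching setting the argument that Suzuki and Niida used in the binary case (Theorem~\ref{theo:sn15}); part (2) will then be deduced from part (1). The engine of the proof is that, against an ID, a depth-first algorithm evaluates each maximal subtree completely before moving on, so by independence its expected cost decomposes recursively. Concretely, if the root is an AND node with children $v_1,\dots,v_k$ rooting isomorphic subtrees $T_1,\dots,T_k$ (isomorphic because $T$ is balanced), and if $q_i$ is the probability that $T_i$ evaluates to $1$ and $C_i$ the cost of an optimal depth-first algorithm on $T_i$, then an optimal depth-first algorithm probes the subtrees in some order $\sigma$ and pays
\[
\sum_{j} \Bigl(\prod_{i<j} q_{\sigma(i)}\Bigr)\, C_{\sigma(j)},
\]
continuing past a child only when that child returns $1$. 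Minimizing over $\sigma$ is a one-dimensional scheduling problem solved by ordering the subtrees by decreasing $(1-q_i)/C_i$; the dual (OR) case is symmetric with the roles of $0$ and $1$ exchanged, and the base case of height one is handled by direct computation.

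First I would fix the root probability, say $\Pr(\mathrm{root}=0)=r$, which for an AND root is the constraint $\prod_i q_i = 1-r$. Since the displayed objective is nondecreasing in each $C_j$ (each coefficient $\prod_{i<j}q_{\sigma(i)}$ is nonnegative, and a minimum of nondecreasing functions is nondecreasing), the maximization over IDs splits: for fixed $q_i$ I should first maximize each subtree cost $C_i$. By the induction hypothesis applied to $T_i$ with its root probability fixed at $1-q_i$, this maximum is attained \emph{only} at the internal IID, and equals a single function $\phi(q_i)$ depending just on $q_i$ and the common subtree height. Because $r<1$ forces every $q_i>0$, each subtree is reached with positive probability, so the monotonicity in $C_j$ is strict and a subtree that is not internally IID yields strictly smaller total cost. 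This reduces everything to the finite-dimensional problem of maximizing
\[
F(q_1,\dots,q_k)=\min_{\sigma}\ \sum_{j}\Bigl(\prod_{i<j} q_{\sigma(i)}\Bigr)\,\phi(q_{\sigma(j)})
\]
subject to $\prod_i q_i = 1-r$. If every maximizer of $F$ has all coordinates equal, then combining this with the inductively forced internal IID structure shows that all leaves share a common probability, i.e.\ the global maximizer is an IID.

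The hard part will be this finite-dimensional claim: that the \emph{symmetric} point is the \emph{unique} maximizer of $F$ under the product constraint. Two features make it delicate. First, the conclusion is a uniqueness statement (any maximizer is IID), so the symmetric point must be shown strictly better than every asymmetric feasible point, which forces me to carry strict inequalities throughout. Second, the inner minimum over the $k!$ orderings makes $F$ a minimum of smooth functions, so I cannot rely on a single Lagrange or concavity computation; instead I would argue by pairwise symmetrization. Given a feasible point with $q_a\neq q_b$, I would replace the pair $(q_a,q_b)$ by $(\sqrt{q_a q_b},\sqrt{q_a q_b})$, which preserves the product constraint, and show the cost strictly increases. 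This requires establishing the relevant monotonicity and a concavity/log-concavity property of $\phi$, obtained by carrying suitable estimates through the recursion alongside the induction; the intuition is that an adversary maximizing cost wants balanced subtrees, so that no single cheap, frequently-terminating subtree lets the algorithm stop early. I expect reconciling this symmetrization with the order-dependence of $F$ (the optimal probing order can change as the $q_i$ are perturbed) to be the principal technical obstacle, and precisely the point where the multi-branching case goes beyond the binary analysis of Suzuki and Niida.

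Finally, I would derive part (2) from part (1). Let $d_1$ be a global maximizer over all IDs, with root probability $r$, where $0<r<1$ by hypothesis. Since $d_1$ is an ID with root probability $r$, and the maximum over all IDs dominates the maximum over those with root probability $r$, the squeeze $\max_{d:\text{ID}}\ge\max_{d:\text{ID},r}\ge \min_{A:\text{depth}}\mathrm{cost}(A,d_1)=\max_{d:\text{ID}}$ shows that $d_1$ already realizes the constrained equilibrium of part (1) for this $r$. Part (1) then gives that $d_1$ is an IID. The hypothesis that the root probability is neither $0$ nor $1$ is exactly what guarantees $r\in(0,1)$ so that part (1) applies, which explains why Peng et al.\ impose it.
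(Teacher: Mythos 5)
This theorem is stated in the paper as a quoted result of Peng et al.\ \cite{PPNTY17}; the paper gives no proof of it, so there is no internal proof to compare your argument against. On its own terms, your plan follows the expected strategy (and, structurally, the Suzuki--Niida/Peng et al.\ line of attack): induct on height, use independence to write the depth-first cost at the root as a sum over children weighted by continuation probabilities, order the children by the ratio rule, push the maximization into the subtrees via monotonicity of the cost in each $C_i$, and reduce to a constrained optimization in $(q_1,\dots,q_k)$ under $\prod_i q_i = 1-r$. That reduction is sound (given all $q_i>0$ every ordering's value is strictly increasing in each $C_j$, so the inner minimum is too), and your derivation of part (2) from part (1) via the squeeze on the two maxima is correct and is indeed how such reductions are done.

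The genuine gap is at exactly the point you flag as ``the hard part'': the claim that the symmetric point is the \emph{unique} maximizer of $F(q_1,\dots,q_k)=\min_\sigma\sum_j\bigl(\prod_{i<j}q_{\sigma(i)}\bigr)\phi(q_{\sigma(j)})$ subject to $\prod_i q_i=1-r$ is the entire content of the theorem, and you do not prove it. The symmetrization step $(q_a,q_b)\mapsto(\sqrt{q_aq_b},\sqrt{q_aq_b})$ rests on monotonicity and log-concavity properties of $\phi$ that you never establish; these would have to be carried through the induction as additional invariants, and their interaction with the order-dependence of the inner minimum (the optimal $\sigma$ can change as you symmetrize, so a strict gain under one ordering does not immediately give a strict gain of the minimum) is acknowledged but not resolved. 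A second, smaller gap: the induction hypothesis in part (1) applies to a subtree $T_i$ only when its root probability lies strictly between $0$ and $1$; the constraint $\prod_i q_i=1-r$ with $0<r<1$ gives $q_i>0$ but does not exclude $q_i=1$ for some $i$, and these boundary configurations---which are precisely why Peng et al.\ impose the extra hypothesis in part (2)---must be ruled out or treated separately. As it stands the proposal is a plausible proof plan rather than a proof.
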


\section{Specific examples}

We look at specific examples on uniform binary AND-OR trees (and, OR-AND trees) with heights 2 or 3. 
We begin by a tree of height 2. The class of IDs on such a tree has a nice property.

\begin{prop} \label{prop:height2}
Suppose that $T$ is a uniform binary AND-OR tree (or a uniform binary OR-AND tree, respectively) of height 2. In addition, suppose that $d$ is an ID. 
Then the following holds. 

(1) There exists a depth-first directional algorithm $A_{0}$ with the following property.
\begin{equation} \label{eq:height2}
\mathrm{cost} (A_{0},d) = \min_{A: \mbox{non-depth}} \mathrm{cost} (A,d)
\end{equation}

If, in addition, we put an assumption that 

(*) at every leaf, probability given by $d$ is not 0 (not 1, respectively), 

\noindent 
then any algorithm $A_{0}$ satisfying \eqref{eq:height2} is depth-first.

(2) If we do not assume (*) then the following holds. 
For any algorithm $A_{0}$ satisfying \eqref{eq:height2}, 
probability (given by $d$) of $A_{0}$ performing non-depth-first move is 0.
\end{prop}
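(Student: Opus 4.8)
The plan is to exploit the extremely limited structure of a height-2 tree. In the AND-OR case the root is an AND node with two OR-children $x_{0},x_{1}$, each carrying two leaves; write $p_{u}$ for the probability (given by $d$) that the leaf $x_{u}$ has value $0$. The OR-AND case I would handle by the dual argument, interchanging the roles of $0$ and $1$ (hence of AND and OR, and of the hypothesis (*)). Throughout I would use that cost is the sum over leaves of the probability that the leaf is queried, and that by relabelling leaves within a subtree and swapping $x_{0}\leftrightarrow x_{1}$ the first query may be assumed to be $x_{00}$.

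First I would pin down the only place a non-depth-first move can arise. An OR node is determined the instant a probed leaf below it returns $1$, and stays undetermined after a single probe only if that probe returned $0$. Hence on the branch $x_{00}=1$ the node $x_{0}$ is already resolved and only the two-leaf OR $x_{1}$ remains, which is evaluated in the obvious (smaller-$p$-first) order, so that branch is forced to be depth-first; the sole depth-first/non-depth-first choice occurs on the branch $x_{00}=0$, where one either probes $x_{01}$ (stay in $x_{0}$) or jumps to $x_{1}$. I would then record closed forms for the resulting three-leaf subproblem $g=x_{01}\wedge(x_{10}\vee x_{11})$: the depth-first value $D=1+(1-p_{01})(1+\min(p_{10},p_{11}))$ and the two non-depth-first values obtained by probing a leaf of $x_{1}$ first, together with the two whole-tree depth-first values (evaluate $x_{0}$ first, or $x_{1}$ first).

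For the equality \eqref{eq:height2} in part (1) I would bound the cost of an arbitrary algorithm below by the best depth-first value. The subtlety is that the non-depth-first continuation of $g$ can be strictly cheaper than $D$, so a naive local exchange fails. The crux is therefore a global statement: whenever the cheapest continuation of $g$ is non-depth-first, the initial move $x_{00}$ was itself globally suboptimal, i.e. the whole-tree cost starting with $x_{00}$ and taking that non-depth-first continuation is at least the depth-first value that evaluates $x_{1}$ first. This reduces to an algebraic inequality in $p_{00},p_{01},p_{10},p_{11}\in[0,1]$, which I would verify directly; taking the minimum over the (symmetric) choices of first query then gives equality of the two minima, with a depth-first directional minimizer.

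For the two refinements I would track equality in that inequality. Under (*) every $p_{u}$ is positive (for OR-AND, every $p_{u}<1$), and I would check that each inequality used is then strict as soon as the branch carrying the non-depth-first move has positive probability; since under (*) every branch is reached with positive probability, any algorithm that performs a non-depth-first move is strictly worse than the best depth-first algorithm, which proves that any $A_{0}$ satisfying \eqref{eq:height2} is depth-first. For part (2), without (*), I would show that equality in the key inequality forces some leaf on the path to take the boundary probability ($0$ for AND-OR, $1$ for OR-AND) of the value that keeps its parent undetermined; consequently the node at which the non-depth-first move is made is reached with probability $0$, and summing over the finitely many such nodes gives that the probability of $A_{0}$ performing a non-depth-first move is $0$. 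The main obstacle is exactly this global comparison in part (1) together with the exact equality analysis, because a non-depth-first move can be locally advantageous and only a whole-tree accounting shows it is never globally advantageous.
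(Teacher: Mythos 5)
Your proposal is correct and follows essentially the same route as the paper: both reduce the problem to a finite list of candidate algorithms on the four leaves (your structural observation that a non-depth-first move can only occur after the first probed leaf of an OR-subtree returns $0$ is the paper's Claim~1, and your ``global comparison'' inequality against the depth-first algorithm that evaluates the other subtree first is exactly what the paper's Claim~2 together with its three-case comparison of the best non-depth-first algorithm against SOLVE and SOLVE${}^{\prime}$ accomplishes), and both treat the degenerate probabilities by tracking equality cases. One small caution: under (*) it is not true that \emph{every} branch has positive probability (a leaf may have probability $1$ of value $0$); what you actually need, and what does hold, is that the specific branch on which the non-depth-first move is made --- the one where the probed leaf returned the value leaving its parent undetermined --- has positive probability.
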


\begin{proof} (sketch)
We investigate the case of AND-OR trees only. The other case is shown in the same way. 

(1) We begin by defining some non-depth-first algorithms.

Algorithm $A ( x_{ij}, x_{(1-i)k}, x_{(1-i)(1-k)}, x_{i(1-j)} )$: 
Probe $x_{ij}$. If $x_{ij}=1$ then we know $x_{i}=1$. Then probe the subtree under $x_{1-i}$ by the most efficient depth-first algorithm.

Otherwise, that is, if $x_{ij}=0$ then probe $x_{(1-i)k}$. 
 If $x_{(1-i)k}=1$ then we know $x_{1-i}=1$.  Then probe $x_{i(1-j)}$. 
 
Otherwise, that is, if $x_{(1-i)k}=0$ then probe $x_{(1-i)(1-k)}$. 
If $x_{(1-i)(1-k)}=0$ then we know $x_{1-i}=0$ and $x_{\lambda}=0$ thus we finish.  

Otherwise, that is, if $x_{(1-i)(1-k)}=1$ then probe $x_{i(1-j)}$. 
This completes the definition of algorithm $A ( x_{ij}, x_{(1-i)k}, x_{(1-i)(1-k)}, x_{i(1-j)} )$. 

Algorithm $A^{\prime} ( x_{ij}, x_{(1-i)k}, x_{i(1-j)}, x_{(1-i)(1-k)} )$: 
Probe $x_{ij}$. If $x_{ij}=1$ then we know $x_{i}=1$. Then probe the subtree under $x_{1-i}$ by the most efficient depth-first algorithm.

Otherwise, that is, if $x_{ij}=0$ then probe $x_{(1-i)k}$. 
If $x_{(1-i)k}=1$ then we know $x_{1-i}=1$.  Then probe $x_{i(1-j)}$. 

If $x_{(1-i)k}=0$ then probe $x_{i(1-j)}$. 
If $x_{i(1-j)}=0$ then we know $x_{i}=0$ and $x_{\lambda}=0$ thus we finish. 

Otherwise, that is, if $x_{i(1-j)}=1$ then probe $x_{(1-i)(1-k)}$. 
This completes the definition of algorithm $A^{\prime} ( x_{ij}, x_{(1-i)k}, x_{i(1-j)}, x_{(1-i)(1-k)} )$.

Thus we have defined 16 algorithms 
$A ( x_{ij}, x_{(1-i)k}, x_{(1-i)(1-k)}, x_{i(1-j)} )$ and 
$A^{\prime} ( x_{ij}, x_{(1-i)k}, x_{i(1-j)}, x_{(1-i)(1-k)} )$ 
($i,j,k \in \{ 0, 1 \}$). 

\vspace{\baselineskip}

\textbf{Claim 1}~ The minimum cost among all non-depth-first algorithms 
is achieved by one of the above 16.

Proof of Claim 1: ~ Straightforward. 
\hfill Q.E.D.(Claim 1)

\vspace{\baselineskip}

Now, suppose that $d$ is an ID such that the probability of leaf $x_{ij}$ is 
$q_{ij}$ ($i,j \in \{ 0,1 \}$). 
Without loss of generality, we may assume that 
$q_{i0} \leq q_{i1}$  ($i \in \{ 0,1 \}$) and $q_{00}q_{01} \geq q_{10}q_{11}$. 
Throughout rest of the proof, cost of a given algorithm denotes 
cost of the algorithm with respect to $d$.

\vspace{\baselineskip}

\textbf{Claim 2}~
If $q_{01} \leq q_{11}$ then among all non-depth-first algorithms, 
the minimum cost is achieved by $A ( x_{00}, x_{10}, x_{11}, x_{01} )$. 
Otherwise, that is, if $q_{01} > q_{11}$ then among all non-depth-first algorithms, 
the minimum cost is achieved by $A ( x_{10}, x_{00}, x_{01}, x_{11} )$. 

Proof of Claim 2: ~ 
Let $f(x,y,z,w)=-xyz+xy-xw+2x+w+1$, for non-negative real numbers 
$x,y,z \leq 1$ and $w$.
For each of 16 algorithms, its cost is written by using $f$. 
For example, cost of $A ( x_{00}, x_{10}, x_{11}, x_{01} )$ is 
$f(q_{00}, q_{10}, q_{11}, q_{10}+1)$, where $q_{ij}$ is the probability of $x_{ij}$. 

By using properties of $f$ (for example, if $x \leq x^{\prime}$ and $w \leq 2$ then $f(x,y,z,w) \leq f(x^{\prime},y,z,w) )$, it is not difficult to verify Claim 2.
\hfill Q.E.D.(Claim 2)

\vspace{\baselineskip}

Let SOLVE be the depth-first directional algorithm defined in Preliminaries section, 
that is, SOLVE probes the leaves in the order $x_{00}, x_{01}, x_{10}, x_{11}$. 
Let SOLVE${}^{\prime}$ be the depth-first directional algorithm 
probing the leaves in the order $x_{10}, x_{11}, x_{00}, x_{01}$. 

Among all depth-first algorithms, either SOLVE or SOLVE${}^{\prime}$ achieves 
the minimum cost. 

We consider three cases. 
Case 1, $q_{01} \leq q_{11}$ and $\mathrm{cost}(\mathrm{SOLVE}) > \mathrm{cost}(\mathrm{SOLVE}^{\prime})$:  
Case 2, $q_{01} \leq q_{11}$ and $\mathrm{cost}(\mathrm{SOLVE}) \leq \mathrm{cost}(\mathrm{SOLVE}^{\prime})$:
Case 3, $q_{11} < q_{01}$. 
The remainder of the proof is routine.

(2) 
The case of all the leaves having positive probability is reduced to (1). 
Otherwise, recall that we assumed that 
$q_{i0} \leq q_{i1}$  ($i \in \{ 0,1 \}$) and $q_{00}q_{01} \geq q_{10}q_{11}$. 
Therefore, $q_{10}=0$. We consider two cases depending on whether $q_{00}=0$ or not. 
The remainder of the proof is easy.
\end{proof}

Tarsi \cite{Ta83} showed an example of an ID on a NAND-tree for which 
no depth-first algorithm is optimal. 
The tree of the example is not balanced, and 
every leaf has distance 4 from the root. 
More precisely, at the level of distance 3 from the root, 
some nodes have two leaves as child nodes while the other nodes have just one leaf as 
child nodes. 

In the case of height 3 binary tree, we are going to show that 
the counterpart of Proposition~\ref{prop:height2} does not hold.

\begin{prop} \label{prop:height3}
Suppose that $T$ is a uniform binary OR-AND tree of height 3. 
Then there exists an ID $d_{0}$ on $T$ with the following property. 
At any leaf, probability of having value 0 is neither 0 nor 1, and 
for any deterministic algorithm $A_{0}$ such that 
\begin{equation} \label{eq:tarsi_again}
\mathrm{cost} (A_{0},d_{0}) = \min_{A: \mbox{non-depth}} \mathrm{cost} (A,d_{0})
\end{equation}

\noindent 
holds, $A_{0}$ is not depth-first.
\end{prop}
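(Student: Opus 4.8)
The plan is to exhibit one explicit independent distribution $d_{0}$ together with a single non-depth-first algorithm $A^{\ast}$, and to show that $A^{\ast}$ is strictly cheaper than \emph{every} depth-first algorithm. This already suffices: if $\mathrm{cost}(A^{\ast},d_{0})<\min_{A:\mbox{depth}}\mathrm{cost}(A,d_{0})$, then $\min_{A:\mbox{non-depth}}\mathrm{cost}(A,d_{0})\le\mathrm{cost}(A^{\ast},d_{0})<\min_{A:\mbox{depth}}\mathrm{cost}(A,d_{0})$, so the global optimum lies strictly below the best depth-first cost and no optimal $A_{0}$ can be depth-first. Thus the whole proposition reduces to verifying the single strict inequality $\mathrm{cost}(A^{\ast},d_{0})<\min_{A:\mbox{depth}}\mathrm{cost}(A,d_{0})$. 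I use the labelling $x_{\lambda}$ (root, OR), $x_{0},x_{1}$ (AND), $x_{ij}$ (OR), and leaves $x_{ijk}$ with probability of value $0$ equal to $p_{ijk}\in(0,1)$.

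First I would pin down the right-hand side exactly. The best depth-first cost is given by a bottom-up recursion, and no generality is lost by taking the algorithm directional: for an OR node with two leaves of probabilities $s,t$ the optimal depth-first cost is $1+\min(s,t)$ and the node has value $0$ with probability $st$; for an AND node $O\wedge O'$ with subtree costs $c,c'$ and $\Pr[O=0]=p$, $\Pr[O'=0]=p'$, the optimal depth-first cost is $\min\{\,c+(1-p)c',\ c'+(1-p')c\,\}$ with value $0$ having probability $p+p'-pp'$; the root OR is treated by the same rule with the roles of $0$ and $1$ exchanged. The reason this recursion is \emph{exact} is that once a depth-first algorithm enters a subtree it is forced to finish it, and at the root it has no information before its first probe, so a depth-first algorithm never benefits from adaptivity; hence $\min_{A:\mbox{depth}}\mathrm{cost}(A,d_{0})$ is the single number this recursion outputs, which makes the right-hand side easy to evaluate at any chosen $d_{0}$.

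The algorithm $A^{\ast}$ exploits the one freedom a depth-first algorithm lacks, namely scouting: it first probes a most-likely-true leaf under $x_{0}$ and a most-likely-true leaf under $x_{1}$ \emph{without} finishing either AND node, and then commits to whichever AND subtree now admits the cheaper certificate for the root's value, finishing that subtree depth-first. The reason the witness must be asymmetric is instructive and guides the choice: by Proposition~\ref{prop:height2} the height-$2$ subtree under each $x_{i}$ has a depth-first optimal algorithm, so interleaving the two OR-children \emph{inside} a single AND node is never profitable, and the only place a non-depth-first move can pay is this root-level interleaving of the two subtrees; moreover, since $T$ is balanced and binary, Theorem~\ref{theo:ta83} forbids any IID (with probabilities in $(0,1)$) from being a witness, so $d_{0}$ must assign genuinely different probabilities to the two subtrees. (This reasoning is only heuristic motivation; the proof itself needs nothing beyond the direct comparison.) Computing $\mathrm{cost}(A^{\ast},d_{0})$ is then a finite case analysis over the outcomes of the two scouting probes.

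To produce $d_{0}$ I would choose asymmetric leaf-probabilities, all strictly inside $(0,1)$, arranged so that scouting lets $A^{\ast}$ avoid the work a fixed depth-first order wastes when it commits to the wrong AND subtree, and then verify the strict inequality by substitution into the recursion above. A convenient device for finding such $d_{0}$ and keeping the algebra light is a continuity argument: each fixed algorithm's cost is a polynomial in the $p_{ijk}$, so both sides are continuous and $\min_{A:\mbox{depth}}$ is a minimum of finitely many polynomials; it is therefore enough to establish strictness at one boundary distribution $d^{\ast}$ (allowing some leaves to take probability $0$ or $1$, which collapses $T$ to a transparent asymmetric shape) and then perturb $d^{\ast}$ slightly to move every probability into $(0,1)$ while preserving the strict gap. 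The hard part will be this quantitative separation: I must locate an asymmetry sharp enough that the adaptive, non-directional $A^{\ast}$ strictly beats the best depth-first order, while taking care that the depth-first optimum really is the recursion value rather than some cheaper adaptive depth-first strategy, and that the chosen numbers yield a strict gap rather than a tie.
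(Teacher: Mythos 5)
There is a genuine gap: the proposition is an existence statement, and your proposal never actually produces the witness. Your reduction is the right one and matches the paper's strategy --- it suffices to exhibit one ID $d_{0}$ with all leaf probabilities in $(0,1)$ and one non-depth-first $A^{\ast}$ with $\mathrm{cost}(A^{\ast},d_{0})<\min_{A:\mbox{depth}}\mathrm{cost}(A,d_{0})$, and your bottom-up recursion for the depth-first optimum is correct for independent distributions. But you stop exactly where the mathematical content begins: no concrete probabilities are chosen, no concrete $A^{\ast}$ is fixed, and no inequality is verified; you yourself defer ``the quantitative separation'' as ``the hard part.'' As written, this is a plan for a proof, not a proof.

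Moreover, the heuristics you propose to guide the search would likely steer you wrong. You infer from Theorem~\ref{theo:ta83} that $d_{0}$ ``must assign genuinely different probabilities to the two subtrees'' under $x_{0}$ and $x_{1}$, but non-IID does not force that asymmetry: the paper's witness $d_{\varepsilon}$ is completely symmetric under swapping $x_{0}$ and $x_{1}$ (every bottom OR node has the same pair of leaf probabilities, $(1+\varepsilon)/2$ and $1/(1+\varepsilon)$); the asymmetry lives \emph{inside} each height-$1$ OR node. Correspondingly, the profitable non-depth-first move is not your root-level ``scouting'' of one leaf in each AND subtree, but a deferral: after learning $x_{00}=1$ and $x_{010}=0$, the leaf $x_{011}$ is almost surely $0$, so $x_{01}$ and hence $x_{0}$ are almost surely $0$; the algorithm therefore puts $x_{011}$ on ice, evaluates the subtree under $x_{1}$, and returns to $x_{011}$ only in the event $x_{1}=0$. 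With this choice the costs compute to $\mathrm{cost}(\mathrm{SOLVE},d_{\varepsilon})\to 63/16$ and $\mathrm{cost}(A_{0},d_{\varepsilon})\to 31/8=62/16$ as $\varepsilon\to 0^{+}$, giving the strict gap for small $\varepsilon>0$. Your continuity/perturbation device is sound in principle, but until a specific boundary distribution and a specific algorithm are written down and the two numbers compared, the proposition has not been proved.
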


\begin{proof}
Given an ID, for each binary string $u$ of length at most 3, 
let $q_{u}$ denote probability of node $x_{u}$ having value 0. 

Let $\varepsilon$ be a positive real number that is small enough. 

Let $d_{\varepsilon}$ be the ID such that $q_{ij0} = (1+\varepsilon)/2$ and $q_{ij1}=1/(1+\varepsilon)$ for each $i,j \in \{ 0,1 \}$. 
Since the labels of $x_{ij}, x_{i}$ and $x_{\lambda}$ are OR, AND and OR, respectively, 
it holds that $q_{ij} = q_{ij0} q_{ij1} = 1/2$, $q_{i} = 3/4$ and $q_{\lambda} = 9/16$. 
The optimal cost among deterministic depth-first algorithms is given by 
$\mathrm{cost} (\mathrm{SOLVE},d_{\varepsilon}) = (1+3/4) (1+1/2) (1 + (1+\varepsilon)/2)$. 

We define a deterministic non-depth-first algorithm $A_{0}$ as follows. 

First, probe the subtree under $x_{00}$ where $x_{000}$ has higher priority. 
If we have $(x_{000}, x_{001})=(0,0)$, that is, if $x_{00}=0$ then we know $x_{0} = 0$. 
In this case, probe the subtree under $x_{1}$ in the same manner as SOLVE until finding the value of the root. 

Otherwise, that is, if either $x_{000}=1$ or $(x_{000}, x_{001})=(0,1)$, 
we know $x_{00}=1$. In this case, probe $x_{010}$. 
If $x_{010} = 1$ then we know $x_{0} = 1$, thus $x_{\lambda} = 1$ and we finish. 

Otherwise, that is, if $x_{010}=0$, put $x_{011}$ on ice, and 
probe the subtree under $x_{1}$ in the same manner as SOLVE until finding the value of $x_{1}$. If $x_{1}=1$ then we know $x_{\lambda}=1$ and finish. 

Otherwise, that is, if $x_{1}=0$ then we probe $x_{011}$. This completes the definition of $A_{0}$. 

\begin{figure}[hb]
\centering
\includegraphics[width=.4\textwidth]{./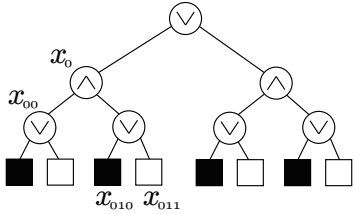}
\caption{ID $d_{\varepsilon}$. Each black leaf has probability $ (1+\varepsilon)/2$, and each white leaf has probability $1/(1+\varepsilon)$.}
\label{fig:orandtreeh3}
\end{figure}

When we have $\varepsilon \to 0+$, it holds that 
$\mathrm{cost} (\mathrm{SOLVE},d_{\varepsilon}) \to 63/16$ and 
$\mathrm{cost} (A_{0},d_{\varepsilon}) \to 31/8$. 
Hence, for $\varepsilon$ small enough, we have the following. 

\begin{equation}
\mathrm{cost} (A_{0},d_{\varepsilon}) 
< \min_{A: \mbox{depth}} \mathrm{cost} (A,d_{\varepsilon})
\end{equation}
\end{proof}

\section{Main theorem}

Questions 1 and 2 in \cite{SN15} are whether Theorem~\ref{theo:sn15} (1) and (2) hold 
in the case where an algorithm runs over all deterministic algorithms including non-depth-first ones. 
In the following, we give positive answers for these questions. 

As we have seen in the previous section, 
given an ID, an optimal algorithm is not necessarily chosen form depth-first ones. 
However, if ID $d$ achieves the equilibrium among IDs 
then there is an optimal algorithm for $d$ that is depth-first,  
which is a key to the following proof.

\begin{theo} \label{theo:answers4sn15} (Main Theorem)
Suppose that $T$ is a uniform binary AND-OR tree (or, OR-AND tree). 

(1) Suppose that $r$ is a real number such that $0 < r <1$. 
Suppose that $d_{0}$ is an ID such that probability of the root having 
value 0 is $r$, and the following equation holds. 

\begin{equation} \label{eq:achieve_eq1r}
\min_{A: \mbox{non-depth}} \mathrm{cost} (A,d_{0})
=
\max_{d:\mbox{ID}, r} \min_{A: \mbox{non-depth}} \mathrm{cost} (A,d)
\end{equation}

Here, $d$ runs over all IDs such that probability of the root having 
value 0 is $r$. 

Then there exists a depth-first directional algorithm $B_{0}$ with the following property. 
\begin{equation} \label{eq:achieve_eq2r}
\mathrm{cost} (B_{0},d_{0}) = \min_{A: \mbox{non-depth}} \mathrm{cost} (A,d_{0})
\end{equation}

In addition, $d_{0}$ is an IID. 

(2) Suppose that $d_{1}$ is an ID that satisfies the following equation. 

\begin{equation} \label{eq:achieve_eq1}
\min_{A: \mbox{non-depth}} \mathrm{cost} (A,d_{1})
=
\max_{d:\mbox{ID}} \min_{A: \mbox{non-depth}} \mathrm{cost} (A,d)
\end{equation}

Here, $d$ runs over all IDs. 
Then there exists a depth-first directional algorithm $B_{0}$ with the following property. 
\begin{equation} \label{eq:achieve_eq2}
\mathrm{cost} (B_{0},d_{1}) = \min_{A: \mbox{non-depth}} \mathrm{cost} (A,d_{1})
\end{equation}

In addition, $d_{1}$ is an IID. 
\end{theo}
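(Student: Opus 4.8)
The plan is to reduce the non-depth-first equilibrium problem to the depth-first results already stated, namely Theorem~\ref{theo:sn15} together with Tarsi's Theorem~\ref{theo:ta83}, by showing that passing from depth-first to arbitrary algorithms does not change the equilibrium value. Write
\[
V_{\mathrm{nd}} := \max_{d:\mbox{ID},r}\ \min_{A:\mbox{non-depth}}\mathrm{cost}(A,d),
\qquad
V_{\mathrm{d}} := \max_{d:\mbox{ID},r}\ \min_{A:\mbox{depth}}\mathrm{cost}(A,d),
\]
and the analogous pair over all IDs for part (2). Since every depth-first algorithm is in particular an algorithm, for each fixed $d$ we have $\min_{A:\mbox{non-depth}}\mathrm{cost}(A,d)\le\min_{A:\mbox{depth}}\mathrm{cost}(A,d)$, and hence at once $V_{\mathrm{nd}}\le V_{\mathrm{d}}$.

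The decisive step is the reverse inequality $V_{\mathrm{d}}\le V_{\mathrm{nd}}$. I would first invoke Theorem~\ref{theo:sn15}(1) to obtain an IID $d^{*}$ with root probability $r$ that achieves the depth-first equilibrium, so $\min_{A:\mbox{depth}}\mathrm{cost}(A,d^{*})=V_{\mathrm{d}}$. Because $r\in(0,1)$, the common leaf probability of $d^{*}$ lies strictly between $0$ and $1$, so Tarsi's Theorem~\ref{theo:ta83} applies to $d^{*}$ and produces a depth-first directional algorithm optimal among all algorithms; this forces $\min_{A:\mbox{non-depth}}\mathrm{cost}(A,d^{*})=\min_{A:\mbox{depth}}\mathrm{cost}(A,d^{*})=V_{\mathrm{d}}$. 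Since $d^{*}$ is one of the distributions over which $V_{\mathrm{nd}}$ is maximized, $\min_{A:\mbox{non-depth}}\mathrm{cost}(A,d^{*})\le V_{\mathrm{nd}}$, and combining yields $V_{\mathrm{d}}\le V_{\mathrm{nd}}$, hence $V_{\mathrm{nd}}=V_{\mathrm{d}}$.

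With the value equality in hand I would pin down $d_{0}$ itself. Using hypothesis \eqref{eq:achieve_eq1r} and the elementary inequality applied to $d_{0}$,
\[
V_{\mathrm{nd}}=\min_{A:\mbox{non-depth}}\mathrm{cost}(A,d_{0})\le\min_{A:\mbox{depth}}\mathrm{cost}(A,d_{0})\le V_{\mathrm{d}}=V_{\mathrm{nd}},
\]
so all three quantities coincide. The equality $\min_{A:\mbox{depth}}\mathrm{cost}(A,d_{0})=V_{\mathrm{d}}$ says precisely that $d_{0}$ achieves the depth-first equilibrium \eqref{eq:achieve_eq_sn15r}, whence Theorem~\ref{theo:sn15}(1) forces $d_{0}$ to be an IID. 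Finally, since $d_{0}$ is an IID with root probability $r\in(0,1)$, Tarsi's Theorem~\ref{theo:ta83} supplies a depth-first directional $B_{0}$ with $\mathrm{cost}(B_{0},d_{0})=\min_{A:\mbox{non-depth}}\mathrm{cost}(A,d_{0})$, establishing \eqref{eq:achieve_eq2r}. Part (2) runs identically with Theorem~\ref{theo:sn15}(2) replacing (1); here the only extra point is to confirm that the IID attaining the equilibrium has root probability strictly between $0$ and $1$ (the degenerate IIDs, where every leaf is determined, yield strictly smaller cost), so that Tarsi's theorem may again be applied both to the auxiliary $d^{*}$ and to $d_{1}$.

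I expect the main obstacle to be exactly the reverse inequality $V_{\mathrm{d}}\le V_{\mathrm{nd}}$ --- equivalently, the assertion flagged by the authors that the equilibrium distribution admits a depth-first optimal algorithm. This is the single point where Tarsi's theorem must be brought to bear, and it is essential that the equilibrium IID be interior so that Tarsi's hypothesis (leaf probability neither $0$ nor $1$) is genuinely met; without interiority the argument collapses. The subsequent identification of $d_{0}$ (resp.\ $d_{1}$) as an IID and the extraction of $B_{0}$ are then routine consequences, and the multi-branching version follows the same template with Theorem~\ref{theo:ppnty17} in place of Theorem~\ref{theo:sn15}, whose root-probability hypothesis in part (2) dovetails with the interiority required by Tarsi.
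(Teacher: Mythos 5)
Your proposal is correct and follows essentially the same route as the paper: both arguments obtain an IID $d^{*}$ at the depth-first equilibrium via Theorem~\ref{theo:sn15}, apply Tarsi's Theorem~\ref{theo:ta83} to it to equate the depth-first and unrestricted optimal costs, and then conclude that $d_{0}$ (resp.\ $d_{1}$) attains the depth-first equilibrium and is therefore an IID, with directionality of $B_{0}$ supplied by Tarsi at the end. The only difference is presentational --- you establish $V_{\mathrm{nd}}=V_{\mathrm{d}}$ directly and sandwich, while the paper proves the existence of a depth-first optimal algorithm by contraposition; the paper also cites Theorem~6 of \cite{SN15} for the interiority of the equilibrium IID's root probability in part (2), which you correctly flag as the point needing justification.
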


\begin{proof}
Proofs of the two assertions are similar. 
We are going to prove assertion (2). 

By the result of Suzuki-Niida (Theorem~\ref{theo:sn15} of the present paper), there exists an IID $d_{2}$ such that the following equation holds. 

\begin{equation} \label{eq:achieve_eq_sn15d2}
\max_{d: \mbox{ID}} \min_{B: \mbox{depth}} \mathrm{cost} (B,d)
=
\min_{B: \mbox{depth}} \mathrm{cost} (B,d_{2})
\end{equation}

\vspace{\baselineskip}

\textbf{Claim 1}~ There exists a depth-first algorithm $B_{0}$ satisfying \eqref{eq:achieve_eq2}.

Proof of Claim 1: ~ We show the claim by contraposition. 
Thus, given an ID $d_{1}$,  we assume that 
no depth-first algorithm $B_{0}$ satisfies \eqref{eq:achieve_eq2}. 
Thus, we have the following inequality. 

\begin{equation} \label{eq:achieve_eq2a}
\min_{A: \mbox{non-depth}} \mathrm{cost} (A,d_{1})
< 
\min_{B: \mbox{depth}} \mathrm{cost} (B,d_{1})
\end{equation}

Our goal is to show the negation of \eqref{eq:achieve_eq1}. 

By \eqref{eq:achieve_eq_sn15d2} and \eqref{eq:achieve_eq2a}, 
we have the following. 

\begin{equation} \label{eq:achieve_eq2b}
\min_{A: \mbox{non-depth}} \mathrm{cost} (A,d_{1})
< 
\min_{B: \mbox{depth}} \mathrm{cost} (B,d_{2})
\end{equation}

By Theorem 6 of \cite{SN15}, in $d_{2}$, probability of the root having value 0 is 
neither 0 nor 1.
Therefore, we can apply the result of Tarsi (Theorem~\ref{theo:ta83} of the present paper) to $d_{2}$. 
Thus, the right-hand side of \eqref{eq:achieve_eq2b} equals the following. 

\[
\min_{A: \mbox{non-depth}} \mathrm{cost} (A,d_{2})
\]

Hence, we have the following.

\begin{equation} \label{eq:achieve_eq2c}
\min_{A: \mbox{non-depth}} \mathrm{cost} (A,d_{1})
< 
\min_{A: \mbox{non-depth}} \mathrm{cost} (A,d_{2})
\end{equation}

Therefore, the negation of \eqref{eq:achieve_eq1} holds. 
\hfill Q.E.D.(Claim 1)

\vspace{\baselineskip}

By \eqref{eq:achieve_eq1}, we have the following. 

\begin{equation} \label{eq:achieve_eq_nondepth2}
\min_{A: \mbox{non-depth}} \mathrm{cost} (A,d_{1})
\geq 
\min_{A: \mbox{non-depth}} \mathrm{cost} (A,d_{2})
\end{equation}

By Claim 1, the left-hand side of 
\eqref{eq:achieve_eq_nondepth2} equals the following. 
\[
\min_{B: \mbox{depth}} \mathrm{cost} (B,d_{1})
\]

Since $d_{2}$ is an IID, by the result of Tarsi (Theorem~\ref{theo:ta83}), the right-hand side of 
\eqref{eq:achieve_eq_nondepth2} equals the following. 
\[
\min_{B: \mbox{depth}} \mathrm{cost} (B,d_{2})
\]

Therefore, we have the following. 

\begin{equation} \label{eq:achieve_eq_nondepth3}
\min_{B: \mbox{depth}} \mathrm{cost} (B,d_{1})
\geq 
\min_{B: \mbox{depth}} \mathrm{cost} (B,d_{2})
\end{equation}

Since $d_{2}$ satisfies \eqref{eq:achieve_eq_sn15d2}, we have the following. 

\begin{equation} \label{eq:achieve_eq_nondepth4}
\min_{B: \mbox{depth}} \mathrm{cost} (B,d_{1})
=
\max_{d: \mbox{ID}} \min_{B: \mbox{depth}} \mathrm{cost} (B,d)
\end{equation}

Hence, by the result of Suzuki-Niida (Theorem~\ref{theo:sn15}), 
$d_{1}$ is an IID. 

Since we have shown that $d_{1}$ is an IID, without loss of generality, 
by the result of Tarsi, we may assume that $B_{0}$ is directional.  
\end{proof}

Corollary~\ref{coro:achieve_eq_then_depth_multi} extends the result of Peng et al. 
(Theorem~\ref{theo:ppnty17}) to the case where non-depth-first algorithms are considered. 

\begin{coro} \label{coro:achieve_eq_then_depth_multi}
Suppose that $T$ is a balanced (multi-branching) AND-OR tree (or, OR-AND tree). 

(1) Suppose that $r$ is a real number such that $0 < r <1$. 
Suppose that $d_{0}$ is an ID such that probability of the root  is $r$, and the following equation holds. 

\begin{equation} \label{eq:achieve_eq101r}
\min_{A: \mbox{non-depth}} \mathrm{cost} (A,d_{0})
=
\max_{d:\mbox{ID}, r} \min_{A: \mbox{non-depth}} \mathrm{cost} (A,d)
\end{equation}

Here, $d$ runs over all IDs such that probability of the root is $r$. 

Then there exists a depth-first directional algorithm $B_{0}$ with the following property. 
\begin{equation} \label{eq:achieve_eq102r}
\mathrm{cost} (B_{0},d_{0}) = \min_{A: \mbox{non-depth}} \mathrm{cost} (A,d_{0})
\end{equation}

In addition, $d_{0}$ is an IID. 

(2) Suppose that $d_{1}$ is an ID such that the following equation holds. 

\begin{equation} \label{eq:achieve_eq_103r}
\min_{A: \mbox{non-depth}} \mathrm{cost} (A,d_{1})
=
\max_{d: \mbox{ID}} \min_{A: \mbox{non-depth}} \mathrm{cost} (A,d)
\end{equation}

Here, $d$ runs over all IDs. 

Then there exists a depth-first directional algorithm $B_{0}$ with the following property. 
\begin{equation} \label{eq:achieve_eq104r}
\mathrm{cost} (B_{0},d_{1}) = \min_{A: \mbox{non-depth}} \mathrm{cost} (A,d_{1})
\end{equation}

If, in addition, probability of the root is neither 0 nor 1 in $d_{1}$ then $d_{1}$ is an IID. 
\end{coro}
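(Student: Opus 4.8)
The plan is to reuse the argument of Theorem~\ref{theo:answers4sn15} almost verbatim, replacing every appeal to Suzuki--Niida (Theorem~\ref{theo:sn15}) by the corresponding part of Peng et al.\ (Theorem~\ref{theo:ppnty17}), and keeping Tarsi (Theorem~\ref{theo:ta83}) as the device that collapses non-depth-first cost to depth-first cost on an IID. I would first record that Tarsi's conclusion \eqref{eq:tarsi} is available on a balanced multi-branching AND-OR (or OR-AND) tree: a NAND-tree and an alternating AND-OR tree are interchanged by flipping leaf values on alternate levels, so \eqref{eq:tarsi} transfers to the trees considered here. Throughout I use the trivial pointwise inequality $\min_{A:\mbox{non-depth}}\mathrm{cost}(A,d)\le\min_{B:\mbox{depth}}\mathrm{cost}(B,d)$, and the whole point is to force equality of the two equilibrium values on the relevant domain of $d$.

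For assertion (1), where $d$ ranges over IDs with fixed root probability $r\in(0,1)$, I would first produce a maximizer $d_{2}$ of $\min_{B:\mbox{depth}}\mathrm{cost}(B,\cdot)$ on this domain (the domain is compact and the cost is a minimum of finitely many continuous functions, so a maximizer exists); by Theorem~\ref{theo:ppnty17}(1) it is an IID, and since its root probability is $r\in(0,1)$, Tarsi applies to $d_{2}$ directly. Then, exactly as in Claim~1 of the proof of Theorem~\ref{theo:answers4sn15}, I argue by contraposition: if no depth-first algorithm meets \eqref{eq:achieve_eq102r}, then $\min_{A:\mbox{non-depth}}\mathrm{cost}(A,d_{0})<\min_{B:\mbox{depth}}\mathrm{cost}(B,d_{0})\le\min_{B:\mbox{depth}}\mathrm{cost}(B,d_{2})=\min_{A:\mbox{non-depth}}\mathrm{cost}(A,d_{2})$, contradicting \eqref{eq:achieve_eq101r}. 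With the claim in hand, combining \eqref{eq:achieve_eq101r} with Tarsi on $d_{2}$ upgrades $d_{0}$ to a maximizer of the depth-first problem, so Theorem~\ref{theo:ppnty17}(1) makes $d_{0}$ an IID; since $r\in(0,1)$, Tarsi then supplies a directional optimal $B_{0}$, which settles \eqref{eq:achieve_eq102r} and the IID claim.

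For assertion (2) I would run the same three moves with $d$ ranging over all IDs, taking $d_{1}$ as the given non-depth-first maximizer and $d_{2}$ as a depth-first maximizer. Identifying $d_{2}$ (and finally $d_{1}$) as an IID is done through Theorem~\ref{theo:ppnty17}(2); correspondingly, the conditional clause ``root probability neither $0$ nor $1$'' in the conclusion is inherited directly from the hypothesis of Theorem~\ref{theo:ppnty17}(2) and cannot be dropped by this method alone.

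The main obstacle is precisely the point at which the binary proof invoked ``Theorem~6 of \cite{SN15}'': to apply Tarsi to the depth-first maximizer $d_{2}$ in assertion (2) I must first know that $d_{2}$ can be chosen with root probability strictly between $0$ and $1$, i.e.\ the multi-branching analogue of the fact that the equilibrium never sits at a degenerate root probability. I would secure this by slicing the domain according to root probability: by Theorem~\ref{theo:ppnty17}(1) the maximizer on each slice $\{d:\mbox{ID},r\}$ is an IID, say of value $g(r)$, and I would then check that $\sup_{r}g(r)$ is attained at some interior $r^{\ast}\in(0,1)$ rather than at $r\in\{0,1\}$, a boundary estimate on the depth-first cost of a balanced IID. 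The remaining loose end is the edge case of assertion (2) in which the given $d_{1}$ itself has root probability $0$ or $1$: there Theorem~\ref{theo:ppnty17}(2) is inapplicable, so the directional $B_{0}$ of \eqref{eq:achieve_eq104r} must either be produced by a direct degenerate argument or such $d_{1}$ must be excluded as an equilibrium by the same boundary estimate.
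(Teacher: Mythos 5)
Your treatment of assertion (1) is essentially the paper's own: substitute Theorem~\ref{theo:ppnty17}(1) for Theorem~\ref{theo:sn15} and rerun the proof of Theorem~\ref{theo:answers4sn15}; the slice maximizer $d_{2}$ is an IID with root probability $r\in(0,1)$, so Tarsi applies and the contraposition goes through. No issue there.

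For assertion (2), however, your plan has a genuine gap that the paper's proof deliberately sidesteps. By imitating the proof of Theorem~\ref{theo:answers4sn15}(2) directly, you are forced to produce a \emph{global} depth-first maximizer $d_{2}$ whose root probability is strictly between $0$ and $1$ --- otherwise neither Theorem~\ref{theo:ppnty17}(2) nor Tarsi is applicable to $d_{2}$. You gesture at ``a boundary estimate'' showing that $\sup_{r}g(r)$ is attained at an interior $r^{\ast}$, but you never prove it; this is exactly the multi-branching analogue of Theorem~6 of \cite{SN15}, it is not among the cited results, and without it the argument does not close. Likewise you leave the degenerate case of $d_{1}$ itself (root probability $0$ or $1$) as a ``loose end,'' even though the corollary still asserts the existence of a depth-first directional $B_{0}$ there. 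The paper needs neither ingredient. For the non-degenerate case it observes that a maximizer over \emph{all} IDs whose root probability happens to be $r\in(0,1)$ is automatically a maximizer over the slice of IDs with root probability $r$ (the global maximum dominates the slice maximum, which dominates the cost at $d_{1}$, and these are therefore all equal), so assertion (2) reduces outright to assertion (1) and no global $d_{2}$ is ever needed. For the degenerate case it argues directly: if, say, the root has value $1$ with probability $1$, then by independence every child of the root has value $1$ with probability $1$, hence (after relabelling) each such child has a leftmost child of value $1$ with probability $1$, and any optimal algorithm can be reordered into a depth-first directional one of equal cost; note that in this case the corollary does not claim $d_{1}$ is an IID, so nothing more is required. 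You should either adopt this reduction or actually prove the interior-maximum estimate and supply the degenerate-case construction.
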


\begin{proof} 
(1) 
By using the result of Peng et al. (Theorem~\ref{theo:ppnty17}) in the place of the result of Suzuki-Niida (Theorem~\ref{theo:sn15}), 
the present assertion (1) is shown in the same way as Theorem~\ref{theo:answers4sn15}. 

(2) 
In the case where probability of the root having value 0 is neither 0 nor 1 in $d_{1}$, the present assertion is reduced to assertion (1). In the following, we investigate the case where the probability is either 0 or 1. 

For example, suppose the probability is 0. 
Then the root has value 1 with probability 1.
Let $A$ be an optimal algorithm. 
In order to know that the root has value 1, 
$A$ has to find that values of all child nodes are 1. 
If $u$ is a child node of the root, $u$ has value 1 with probability 1. 
Thus, without loss of generality, the leftmost child of $u$ has value 1 
with probability 1. 
Now, it is easy to see that there exists a depth-first directional algorithm 
$B_{0}$ such that 
$\mathrm{cost} (B_{0}, d_{1}) = \mathrm{cost} (A, d_{1})$. 
\end{proof}

\section*{Acknowledgements}

We are grateful to Masahiro Kumabe, Mika Shigemizu and Koki Usami for helpful discussions.
We made an oral presentation of this work at Workshop on Computability Theory and the Foundations of Mathematics (8 - 12 September 2017) at National University of Singapore. 
This  work  was  partially  supported  by  Japan  Society  for  the Promotion of Science (JSPS) KAKENHI (C) 16K05255.

\end{document}